\title{An Automated Market Maker Minimizing Loss-Versus-Rebalancing}
\author{Conor McMenamin\inst{1} \and Vanesa Daza\inst{1,2} \and Bruno Mazorra\inst{1}}
\authorrunning{McMenamin, Daza and Mazorra}
\institute{Department of Information and Communication Technologies, Universitat Pompeu Fabra, Barcelona, Spain \and CYBERCAT - Center for Cybersecurity Research of Catalonia}
\def\bitcoin{%
  \leavevmode
  \vtop{\offinterlineskip 
    \setbox0=\hbox{B}%
    \setbox2=\hbox to\wd0{\hfil\hskip-.03em
    \vrule height .3ex width .15ex\hskip .08em
    \vrule height .3ex width .15ex\hfil}
    \vbox{\copy2\box0}\box2}}
\newcommand{\MIFP}{\varepsilon}
\newcommand{\pool}{\Phi}
\newcommand{\newconstruct}[5]{%
  \newenvironment{ALC@\string#1}{\begin{ALC@g}}{\end{ALC@g}}
   \newcommand{#1}[2][default]{\ALC@it#2\ ##2\ #3%
     \ALC@com{##1}\begin{ALC@\string#1}}
   \ifthenelse{\boolean{ALC@noend}}{
     \newcommand{#4}{\end{ALC@\string#1}}
   }{
     \newcommand{#4}{\end{ALC@\string#1}\ALC@it#5}
   } 
}
\begin{document}

\maketitle

\begin{abstract}
  The always-available liquidity of automated market makers (AMMs) has been one of the most important catalysts in early cryptocurrency adoption. However, it has become increasingly evident that AMMs in their current form are not viable investment options for passive liquidity providers. This is large part due to the cost incurred by AMMs providing stale prices to arbitrageurs against external market prices, formalized as loss-versus-rebalancing (LVR) [Milionis et al., 2022].

  In this paper, we present Diamond, an automated market making protocol that aligns the incentives of liquidity providers and block producers in the protocol-level retention of LVR. In Diamond, block producers effectively auction the right to capture any arbitrage that exists between the external market price of a Diamond pool, and the price of the pool itself. The proceeds of these auctions are shared by the Diamond pool and block producer in a way that is proven to remain incentive compatible for the block producer. Given the participation of competing arbitrageurs to capture LVR, LVR is minimized in Diamond. 
  We formally prove this result, and detail an implementation of Diamond. We also provide comparative simulations of Diamond to relevant benchmarks, further evidencing the LVR-protection capabilities of Diamond.
  With this new protection, passive liquidity provision on blockchains can become rationally viable, beckoning a new age for decentralized finance.
\end{abstract}
\section{Introduction}

CFMMs such as Uniswap \cite{UniswapWebsite} have emerged as the dominant class of AMM protocols. 
CFMMs offer several key advantages for decentralized liquidity provision. They are efficient computationally, have minimal storage needs, matching computations can be done quickly, and liquidity providers can be passive. Thus, CFMMs are uniquely suited to the severely computation- and storage-constrained environment of blockchains. 

Unfortunately, the benefits of CFMMs are not without significant costs. One of these costs is definitively formalized in \cite{LVRRoughgarden} as \textit{loss-versus-rebalancing} (LVR). It is proved that as the underlying price of a swap moves around in real-time, the discrete-time progression of AMMs leave arbitrage opportunities against the AMM. In centralized finance, market makers typically adjust to new price information before trading. This comes at a considerable cost to AMMs (for CFMMs, \cite{LVRRoughgarden} derives the cost to be quadratic in realized moves), with similar costs for AMMs derived quantitatively in \cite{Park2021TheCF,DEXCritiqueCapponi}, and presented in Figure \ref{fig:Toxic}.

\begin{figure}[t]
    \centering
    \includegraphics[scale=0.45]{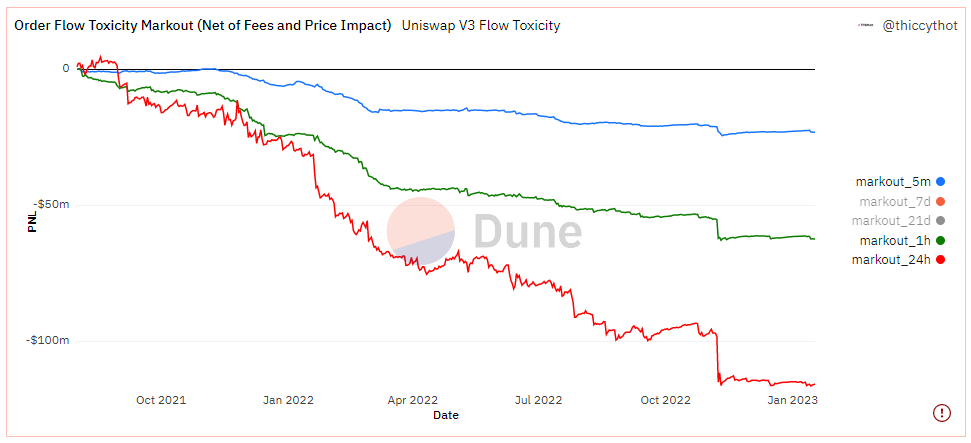}
    \caption[Toxicity of Uniswap V3 Order Flow \cite{duneToxicityQuery}.]{Toxicity of Uniswap V3 Order Flow \cite{duneToxicityQuery}. This graph aggregates the PnL (\textit{toxicity}) of all trades on the Uniswap V3 WETH/USDC pool, measuring PnL of each order after 5 minutes, 1 hour, and 1 day. These are typical time periods within which arbitrageurs close their positions against external markets. This demonstrates the losses being incurred in existing state-of-the-art DEX protocols are significant, consistent, and unsustainable; toxic.}
    \label{fig:Toxic}
\end{figure}

These costs are being realized by liquidity providers in current AMM protocols. 
All of these factors point towards unsatisfactory protocol design, and a dire need for an LVR-resistant automated market maker. In this paper, we provide Diamond, an AMM protocol which formally protects against LVR.

\subsection{Our Contribution}

We present Diamond, an AMM protocol which isolates the LVR being captured from a Diamond liquidity pool, and forces some percentage of these LVR proceeds to be returned to the pool. As in typical CFMMs, Diamond pools are defined with respect to two tokens $x$ and $y$. At any given time, the pool has reserves of $R_x$ and $R_y$ of both tokens, and some pool pricing function\footnote{See Equation \ref{eq:pricerestrictions} for a full description of pool pricing functions as used in this paper} $\textit{PPF}(R_x,R_y)$. We demonstrate our results using the well-studied Uniswap V2 pricing function of $\textit{PPF}(R_x,R_y)=\frac{R_x}{R_y}$. 

In Diamond, block producers are able to capture the block LVR of a Diamond pool themselves or auction this right among a set of competing arbitrageurs. In both cases, the block producer revenue approximates the arbitrage revenue. Therefore, block producers are not treated as traditional arbitrageurs, but rather as players with effective arbitrage capabilities due to their unique position in blockchain protocols. 

For each Diamond pool, we introduce the concept of its \textit{corresponding CFMM pool}. Given a Diamond pool with token reserves $(R_x,R_y)$ and pricing function $\textit{PPF}(R_x,R_y)=\frac{R_x}{R_y}$, the corresponding CFMM pool is the Uniswap V2 pool with reserves $(R_x,R_y)$. If a block producer tries to move the price of the corresponding CFMM pool adding $\Upsilon_x$ tokens and removing $\Upsilon_y$, the same price is achieved in the Diamond pool by adding $(1-\beta)\Upsilon_x$ tokens for some $\beta>0$, with $\beta$ the \textit{LVR rebate parameter}. The block producer receives $(1-\beta)\Upsilon_y$. In our framework, it can be seen that $\textit{PPF}(R_x+(1-\beta)\Upsilon_x, R_y-(1-\beta)\Upsilon_y)<\textit{PPF}(R_x+\Upsilon_x,R_y-\Upsilon_y)$, which also holds in our example, as $\frac{R_x+(1-\beta)\Upsilon_x}{R_y-(1-\beta)\Upsilon_y}<\frac{R_x+\Upsilon_x}{R_y-\Upsilon_y}$. A further $\upsilon_y$ tokens are removed from the Diamond pool to move the reserves to the same price as the corresponding CFMM pool, with these tokens added to a \textit{vault}.

Half of the tokens in the vault are then periodically converted into the other token (at any time, all tokens in the vault are of the same denomination) in one of the following ways:
\begin{enumerate}
    \item An auction amongst arbitrageurs.
    \item Converted every block by the block producer at the final pool price. If the block producer must buy $\frac{\eta}{2}$ tokens to convert the vault, the block producer must simultaneously sell $\frac{\eta}{2}$ futures which replicate the price of the token to the pool. These futures are then settled periodically, either by \begin{enumerate}
        \item Auctioning the $\frac{\eta}{2}$ tokens corresponding to the futures amongst competing arbitrageurs with the protocol paying/collecting the difference.
        \item The use of a decentralized price oracle. In this paper, we consider the use of the settlement price of an on-chain frequent batch auction, such as that of \cite{FairTraDEXMcMenamin}, which is proven to settle at the external market price in expectancy.
    \end{enumerate}
\end{enumerate}

Importantly, these auctions are not required for protocol liveness, and can be arbitrarily slow to settle. We prove that all of these conversion processes have 0 expectancy for the block producer or Diamond pool, and prove that the LVR of a Diamond pool is $(1-\beta)$ of the corresponding CFMM pool. Our implementation of Diamond isolates LVR arbitrageurs from normal users, using the fact that arbitrageurs are always bidding to capture LVR. Specifically, if an LVR opportunity exists at the start of the block, an arbitrageur will bid for it in addition to ordering normal user transactions, meaning the proceeds of a block producer are at least the realized LVR, with LVR corresponding to the difference between the start- and end-states of an AMM in a given block.
This ensures the protections of Diamond can be provided in practice while providing at least the same trading experience for normal users. Non-arbitrageur orders in a Diamond pool can be performed identically to orders in the corresponding CFMM pool after an arbitrageur has accepted to interact with the pool through a special arbitrageur-only transaction. Although this means user orders may remain exposed to the front-running, back-running and sandwich attacks of corresponding CFMMs, the LVR retention of Diamond pools should result in improved liquidity and reduced fees for users. 

We discuss practical considerations for implementing Diamond, including decreasing the LVR rebate parameter, potentially to 0, during periods of protocol inactivity until transactions are processed, after which the parameter should be reset. This ensures the protocol continues to process user transactions, which becomes necessary when arbitrageurs are not actively extracting LVR. If arbitrageurs are not arbitraging the pool for even small LVR rebate parameters, it makes sense to allow transactions to be processed as if no LVR was possible. In this case, Diamond pools perform identically to corresponding CFMM pools. However, if/when arbitrageurs begin to compete for LVR, we expect LVR rebate parameters to remain high.

We present a series of experiments in Section \ref{sec:Diamond:sims} which isolate the benefits of Diamond. We compare a Diamond pool to its corresponding Uniswap V2 pool, as well as the strategy of holding the starting reserves of both tokens, demonstrating the power of Diamond. We isolate the effects of price volatility, LVR rebate parameter, pool fees, and pool duration on a Diamond pool. Our experiments provide convincing evidence that the relative value of a Diamond pool to its corresponding Uniswap V2 pool is increasing in each of these variables. These experiments further evidence the limitations of current CFMMs, and the potential of Diamond.

\subsection{Organization of the Paper}

Section \ref{sec:Diamond:RW} analyzes previous work related to LVR in AMMs. Section \ref{sec:Diamond:Prelims} outlines the terminology used in the paper. Section \ref{sec;protocol} introduces the Diamond protocol. Section \ref{sec:Diamond:properties} proves the properties of Diamond. Section \ref{sec:Diamond:implementation} describes how to implement the Diamond protocol, and practical considerations which should be made. Section \ref{sec:Diamond:sims} provides an analysis Diamond over multiple scenarios and parameters, including a comparison to various reference strategies. We conclude in Section \ref{sec:Diamond:conclusion}.

\section{Related Work}\label{sec:Diamond:RW}

There are many papers on the theory and design of AMMs, with some of the most important including \cite{Adams2020UniswapVC,Adams2021UniswapVC,LVRRoughgarden,AMMTheoryBartoletti,AMMMEVBartoletti}. The only peer-reviewed AMM design claiming protection against LVR \cite{DynamicAMMsKrishnamachari} is based on live price oracles. The AMM must receive the price of a swap before users can interact with the pool. Such sub-block time price data requires centralized sources which are prone to manipulation, or require the active participation of AMM representatives, a contradiction of the passive nature of AMMs and their liquidity providers. We see this as an unsatisfactory dependency for DeFi protocols. 

Attempts to provide LVR protection without explicit use of oracles either use predictive fees for all players \cite{OptimalFeesChitra} and/or reduce liquidity for all players through more complex constant functions \cite{LVRSDAMMBichuch}. Charging all users higher fees to compensate for arbitrageur profits reduces the utility of the protocol for genuine users, as does a generalized liquidity reduction. In Diamond, we only reduce liquidity for arbitrageurs (which can also be seen as an increased arbitrageur-specific fee), providing at least the same user experience for typical users as existing AMMs without LVR protection.

A recent proposed solution to LVR published in a blog-post \cite{McAMMs} termed \textit{MEV-capturing AMMs} (McAMMs) considers auctioning off the first transaction/series of transaction in an AMM among arbitrageurs, with auction revenue paid in some form to the protocol. Two important benefits of Diamond compared to the proposed McAMMs are the capturing of realized LVR in Diamond as opposed to predicted LVR in McAMMs, and decentralized access to Diamond compared to a single point of failure in McAMMs.

In McAMMs, bidders are required to predict upcoming movements in the AMM. Bidders with large orders to execute over the period (e.g. private price information, private order flow, etc.) have informational advantages over other bidders. Knowing the difference between expected LVR excluding this private information vs. true expected LVR allows the bidder to inflict more LVR on the AMM than is paid for. As this results in better execution for the winner's orders, this may result in more private order flow, which exacerbates this effect. Diamond extracts a constant percentage of the true LVR, regardless of private information.
McAMMs also centralize (first) access control to the winning bidder. If this bidder fails to respond or is censored, user access to the protocol is prohibited/more expensive. Diamond is fully decentralized, incentive compatible and can be programmed to effectively remove LVR in expectancy. Future McAMM design improvements based on sub-block time auctions are upper-bounded by the current protection provided by Diamond.

\section{Preliminaries}\label{sec:Diamond:Prelims}

This section introduces the key terminology and definitions needed to understand LVR, the Diamond protocol, and the proceeding analysis. In this work we are concerned with a single swap between token $x$ and token $y$. We use $x$ and $y$ subscripts when referring to quantities of the respective tokens. The external market price of a swap is denoted by $\MIFP$, while pool prices and price functions are denoted using a lowercase $p$ and uppercase $P$ respectively. The price of a swap is quoted as the quantity of token $x$ per token $y$. 

In this work we treat the block producer and an arbitrageur paying for the right to execute transactions in a block as the same entity. This is because the the arbitrageur must have full block producer capabilities, and vice versa, with the payoff for the block producer equal to that of an arbitrageur under arbitrageur competition. For consistency, and to emphasize the arbitrage that is taking place in extracting LVR, we predominantly use the arbitrageur naming convention. That being said, it is important to remember that this arbitrageur has exclusive access to building the sub-block of Diamond transactions. Where necessary, we reiterate that it is the block producer who control the per-block set of Diamond transactions, and as such, the state of the Diamond protocol.

\subsection{Constant Function Market Makers}\label{sec:Diamond:CFMMs}

A CFMM is characterized by \textit{reserves} $(R_x,R_y) \in \mathbb{R}_{+}^2$ which describes the total amount of each token in the pool. The price of the pool is given by \textit{pool price function} $PPF: \mathbb{R}^2_{+} \rightarrow \mathbb{R}$ taking as input pool reserves $(R_x,R_y)$. $PPF$ has the following properties:
\begin{equation}\label{eq:pricerestrictions}
    \begin{aligned}
        \text{(a)}&\  PPF \text{ is everywhere differentiable, with }\frac{\partial PPF}{\partial R_x}>0,\ \frac{\partial PPF}{\partial R_y}<0. \\
        \text{(b)}& \ \lim_{R_x \rightarrow 0} PPF = 0, \ \lim_{R_x \rightarrow \infty} PPF = \infty, \ \lim_{R_y \rightarrow 0} PPF = \infty, \ \lim_{R_y \rightarrow \infty} PPF = 0. \\
        \text{(c)}& \ \text{If } \textit{PPF}(R_x,R_y)=p, \text{ then } \textit{PPF}(R_x+ cp, R_y + c )=p, \ \forall c >0. \\
    \end{aligned}
\end{equation}

These are typical properties of price functions. Property (a) states the price of $y$ is increasing in the number of $x$ tokens in the pool and decreasing in the number of $y$ tokens. Property (b) can be interpreted as any pool price value is reachable for a fixed $R_x$, by changing the reserves of $R_y$, and vice versa. Property (c) states that adding reserves to a pool in a ratio corresponding to the current price of the pool does not change the price of the pool. These properties trivially hold for the Uniswap V2 price function of $\frac{R_x}{R_y}$, and importantly allow us to generalize our results to a wider class of CFMMs.

For a CFMM, the \textit{feasible set of reserves} $C$ is described by:
\begin{equation}
    C = \{ (R_x,R_y) \in \mathbb{R}_{+}^2 : \textit{PIF}(R_x,R_y)=k \}
\end{equation}
where $\textit{PIF}:\mathbb{R}_{+}^2 \rightarrow \mathbb{R}$ is the pool invariant and $k \in \mathbb{R}$ is a constant. The pool is defined by a smart contract which allows any player to move the pool reserves from the current reserves $(R_{x,0},R_{y,0}) \in C$ to any other reserves $(R_{x,1},R_{y,1}) \in C$ if and only if the player provides the difference $(R_{x,1}- R_{x,0} ,R_{y ,1}-R_{y,0})$.

Whenever an arbitrageur interacts with the pool, say at time $t$ with reserves $(R_{x,t},R_{y,t})$, we assume as in \cite{LVRRoughgarden} that the arbitrageur maximizes their profits by exploiting the difference between $\textit{PPF}(R_{x,t},R_{y,t})$ and the external market price at time $t$, denoted $\MIFP_t$. To reason about this movement, we consider a \textit{pool value function} $V: \mathbb{R}_{+} \rightarrow \mathbb{R}$ defined by the optimization problem:

\begin{equation}\label{eq:poolOptimize}
    V(\MIFP_t) = \displaystyle\min\limits_{(R_x,R_y) \in \mathbb{R}_{+}^2}\MIFP_t R_y + R_x, \text{ such that } \textit{PIF}(R_x,R_y)=k
\end{equation}
Given an arbitrageur interacts with the pool with external market price $\MIFP_t$, the arbitrageur moves the pool reserves to the $(R_x,R_y)$ satisfying $V(\MIFP_t)$.

\subsection{Loss-Versus-Rebalancing}\label{sec:Diamond:LVR}

LVR, and its prevention in AMMs is the primary focus of this paper. The formalization of LVR \cite{LVRRoughgarden} has helped to illuminate one of the main costs of providing liquidity in CFMMs.
The authors of \cite{LVRRoughgarden} provide various synonyms to conceptualize LVR. In this paper, we use the opportunity cost of arbitraging the pool against the external market price of the swap, which is proven to be equivalent to LVR in Corollary 1 of \cite{LVRRoughgarden}. The LVR between two blocks $B_t$ and $B_{t+1}$ where the reserves of the AMM at the end of $B_t$ are $(R_{x,t},R_{y,t})$ and the external market price when creating block $B_{t+1}$ is $\MIFP_{t+1}$ is:
\begin{equation}\label{eq:LVR}
    R_{x,t}+R_{y,t}\MIFP_{t+1}- V(\MIFP_{t+1}) = (R_{x,t}-R_{x,t+1})+ (R_{y,t}-R_{y,t+1})\MIFP_{t+1}.
\end{equation}
As this is the amount being lost to arbitrageurs by the AMM, this is the quantity that needs to be minimized in order to provide LVR protection. In Diamond, this minimization is achieved. 

\subsection{Auctions}

To reason about the incentive compatibility of parts of our protocol, we outline some basic auction theory results.

\textbf{First-price-sealed-bid-auction}: There is a finite set of players $\mathcal I$ and a single object for sale. Each bidder $i\in\mathcal I$ assigns a value of $X_i$ to the object. Each $X_i$ is a random variable that is independent and identically distributed on some interval $[0,V_{max}]$. The bidders know its realization $x_i$ of $X_i$. We will assume that bidders are risk neutral, that they seek to maximize their expected payoff. Per auction, each player submit a bid $b_i$ to the auctioneer. The player with the highest bid gets the object and pays the amount bid. In case of tie, the winner of the auction is chosen randomly. Therefore, the utility of a player $i\in\mathcal I$ is
\begin{equation*}
    u_i(b_i,b_{-i})=\begin{cases}
                    \frac{x_i-b_i}{m},\text{ if }b_i=\text{max}_i\{b_i\},\\
                    0, \ \ \ \ \ \ \text{ otherwise}
                     \end{cases}
\end{equation*}
where $m=|\text{argmax}_i\{b_i\}|$. In our protocol, we have an amount of tokens $z$ that will be auctioned. This object can be exchanged by all players at the external market price $\MIFP$. In this scenario, we have the following lemma. Proofs are included in the Appendix

\begin{restatable}{lemma}{auction}\label{lem:incentiveCompatibleAuctions}
Let $\mathcal I$ be a set of players that can exchange at some market any amount of tokens $x$ or $y$ at the external market price $\MIFP$. If an amount $z$ of token  $y$ is auctioned in a first-price auction, then the maximum bid of any Nash equilibrium is at least $z\MIFP$.
\end{restatable}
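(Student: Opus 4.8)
The plan is to analyze the first-price auction where bidders can trade the auctioned token at the external price $\MIFP$, so each bidder's effective valuation of a block of $z$ tokens of $y$ is exactly $z\MIFP$ (they can immediately sell the $z$ tokens of $y$ on the external market for $z\MIFP$ tokens of $x$). This collapses the independent-private-values setup into a common-value situation: every bidder $i$ has $x_i = z\MIFP$. I would then argue by contradiction that in any Nash equilibrium the winning (maximum) bid $b^\star$ satisfies $b^\star \geq z\MIFP$.

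First I would set up notation: let $(b_i)_{i\in\mathcal I}$ be a Nash equilibrium profile, let $b^\star = \max_i b_i$ be the winning bid, and let $W$ be the set of winners (bidders attaining $b^\star$), with $m = |W|$. The utility of a winner is $(z\MIFP - b^\star)/m$ and the utility of a loser is $0$. Step one: observe that in equilibrium no bidder can have strictly negative utility, since bidding $0$ (or any losing bid) guarantees $0$; hence $b^\star \leq z\MIFP$ — wait, that is the wrong direction, so the real content is the reverse inequality. Step two, the crux: suppose for contradiction that $b^\star < z\MIFP$. Then a winner currently gets $(z\MIFP - b^\star)/m > 0$ if $m\geq 1$. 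Consider a losing bidder $j$ (if $\mathcal I$ has at least two players; I would handle $|\mathcal I|=1$ separately or note it trivially — actually with one player the equilibrium bid could be $0$, so I must check the lemma's hypotheses; presumably $|\mathcal I|\geq 2$ is intended, or the auctioneer sets a reserve, and I would state this). Bidder $j$ can deviate to a bid $b'$ with $b^\star < b' < z\MIFP$, winning outright and obtaining utility $z\MIFP - b' > 0 > 0$, strictly improving. This contradicts the equilibrium property. Alternatively, if every bidder is a winner ($m = |\mathcal I|$), then any single winner can deviate to $b^\star + \epsilon$ for small $\epsilon$ with $b^\star+\epsilon < z\MIFP$, moving from utility $(z\MIFP-b^\star)/m$ to $z\MIFP - b^\star - \epsilon$, which is strictly larger for $m\geq 2$ and $\epsilon$ small; so again a profitable deviation exists. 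Either way we contradict the Nash property, so $b^\star \geq z\MIFP$.

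I would then conclude that in every Nash equilibrium the maximum bid is at least $z\MIFP$, which is the statement. The main obstacle is the edge-case bookkeeping: the tie case where all bidders bid the same amount needs the $m \geq 2$ deviation argument, the single-bidder case needs either a hypothesis ruling it out or a reserve price, and one should be careful that the deviation bid $b'$ can be chosen strictly between $b^\star$ and $z\MIFP$ (which is possible precisely because we assumed $b^\star < z\MIFP$, so the open interval is nonempty). None of these involve real calculation, but getting the case split clean is the delicate part; everything else is the standard Bertrand-competition-style undercutting/overbidding argument adapted to the fact that the common value here is the liquidation value $z\MIFP$.
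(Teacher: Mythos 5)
Your proof is correct, but it takes a genuinely different route from the paper. The paper's proof is indirect: it observes that the liquidation option bounds every bidder's value below by $z\MIFP$, notes that in a second-price auction each bidder would therefore bid at least $z\MIFP$ in equilibrium, and then invokes the revenue equivalence theorem to transfer the conclusion to the first-price format. You instead argue directly on the first-price game: you collapse the setting to a common value of $z\MIFP$ and run the standard overbidding/deviation argument, showing that any profile with maximum bid below $z\MIFP$ admits a profitable deviation (by a loser bidding into the gap, or by a tied winner breaking the tie). Your approach is more elementary and arguably more faithful to the literal claim, which quantifies over \emph{every} Nash equilibrium: revenue equivalence is a statement about expected revenue in (typically symmetric, increasing) equilibria under i.i.d.\ private values, so the paper's appeal to it is doing more work than it strictly licenses, whereas your deviation argument applies to an arbitrary equilibrium profile with no symmetry or monotonicity hypotheses. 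What the paper's route buys is brevity and immediate robustness to heterogeneous values $X_i \geq z\MIFP$ (which your argument also handles, since the deviating bidder only needs her own value to be at least $z\MIFP$). Your flagged edge cases are real but minor: the single-bidder case does falsify the statement as literally written (the lone bidder bids $0$), so $|\mathcal I|\geq 2$ is indeed an implicit hypothesis that the paper also leaves unstated; and the stray ``$>0>0$'' is a typo that does not affect the argument.
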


\section{\texorpdfstring{Diamond}{TEXT}}\label{sec;protocol}

This section introduces the Diamond protocol. When the core protocol of Section \ref{sec;description} is run, some amount of tokens are removed from the pool and placed in a \textit{vault}. These vault tokens are eventually re-added to the pool through a conversion protocol. Sections \ref{sec:Diamond:conv1} and \ref{sec:Diamond:conv2} detail two conversion protocols which can be run in conjunction with the core Diamond protocol. Which conversion protocol to use depends on the priorities of the protocol users, with a  discussion of their trade-offs provided in Section \ref{sec:Diamond:sims}, and represented graphically in Figure \ref{fig:StrategyComparison}. 
These trade-offs can be summarized as follows: 
\begin{itemize}
    \item The process of Section \ref{sec:Diamond:conv1} forces the arbitrageur to immediately re-add the removed tokens to the Diamond pool, while ensuring the ratio of pool tokens equals the external market price. This ratio is achieved by simultaneously requiring the arbitrageur to engage in a futures contract tied to the pool price, with the arbitrageur taking the opposite side of the contract. These futures offset any incentive to manipulate the ratio of tokens. This results in a higher variance of portfolio value for both the Diamond pool and the arbitrageur. In return for this risk, this process ensures the pool liquidity is strictly increasing in expectancy every block, with the excess value (reduced LVR) retained by the vault immediately re-added to the pool. This process can be used in conjunction with a decentralized price oracle to ensure the only required participation of arbitrageurs is in arbitraging the pool (see process 2 in Section \ref{sec:Diamond:conv1}). It should be noted that these futures contracts have collateral requirements for the arbitrageur, which has additional opportunity costs for the arbitrageur. 
    \item The process in Section \ref{sec:Diamond:conv2} converts the vault tokens periodically. This can result in a large vault balance accruing between conversions, with this value taken from the pool. This means the quality (\textit{depth}) of liquidity is decreasing between conversions, increasing the impact of orders. From the AMM's perspective, this process incurs less variance in the total value of tokens owned by the pool (see Figure \ref{fig:StrategyComparison}), and involves a more straightforward and well-studied use of an auction (compared to a trusted decentralized oracle). There is also no collateral requirement for the arbitrageur outside of the block in which the arbitrage occurs. \footnote{As the arbitrageur and block producer are interchangeable from Diamond's perspective, we see the requirement for the block producer/arbitrageur to provide collateral in a block controlled by the block producer as having negligible cost.}
\end{itemize}

Section \ref{sec:Diamond:properties} formalizes the properties of Diamond, culminating in Theorem \ref{thm:main}, which states that Diamond can be parameterized to reduce LVR arbitrarily close to 0. It is important to note that Diamond is not a CFMM, but the rules for adjusting pool reserves are dependent on a CFMM.

\subsection{Model Assumptions}\label{sec:Diamond:assumptions}

We outline here the assumptions used when reasoning about Diamond. In keeping with the seminal analysis of \cite{LVRRoughgarden}, we borrow a subset of the assumptions therein, providing here a somewhat more generalized model. 

\begin{enumerate}
    \item External market prices follow a martingale process.
    \item The risk-free rate is 0.
    \item There exists a population of arbitrageurs able to frictionlessly trade at the external market price, who continuously monitor and periodically interact with AMM pools.
    \item An optimal solution $(R^*_x,R^*_y) $ to Equation \ref{eq:poolOptimize} exists for every external market price $\MIFP\geq 0$. 
\end{enumerate}

The use of futures contracts in one version of the Diamond protocol makes the  risk-free rate an important consideration for implementations of Diamond. If the risk free rate is not 0, the profit or loss related to owning token futures vs. physical tokens must be considered. Analysis of a non-zero risk-free rate is beyond the scope of the thesis.

\subsection{Core Protocol}\label{sec;description}

We now describe the core Diamond protocol, which is run by all Diamond variations. A Diamond pool $\Phi$ is described by reserves $(R_{x},R_{y})$, a pool pricing function $\textit{PPF}()$, a pool invariant function $\textit{PIF}()$, an \textit{LVR rebate parameter} $\beta \in (0,1)$, and \textit{conversion frequency} $\uptau \in \mathbb{N}$. 

We define the \textit{corresponding CFMM pool} of $\Phi$, denoted $\textit{CFMM}(\Phi)$, as the CFMM pool with reserves $(R_{x},R_{y})$ whose feasible set is described by pool invariant function $\textit{PIF}()$ and pool constant $k=\textit{PIF}(R_{x},R_{y})$. Conversely, $\Phi$ is the \textit{corresponding Diamond pool} of $\textit{CFMM}(\Phi).$ It is important to note that the mapping of $\Phi$ to $\textit{CFMM}(\Phi)$ is only used to describe the state transitions of $\Phi$, with $\textit{CFMM}(\Phi)$ changing every time the $\Phi$ pool reserves change.

Consider pool reserves $(R_{x,0},R_{y,0})$ in $\Phi$ at time $t=0$ (start of a block), and an arbitrageur wishing to move the price of $\Phi$ at time $t=1$ (end of the block) to $p_1= \frac{R_{x,1}}{R_{y,1}} \neq \frac{R_{x,0}}{R_{y,0}}$. 
In Diamond, to interact with the pool at time $t=0$, the arbitrageur must deposit some amount of collateral, $(C_x, C_y)\in \mathbb{R}_{+}^2  $. This is termed the \textit{pool unlock transaction}. 
After the pool unlock transaction, the arbitrageur can then execute arbitrarily many orders (on behalf of themselves or users) against $\Phi$, exactly as the orders would be executed in $\textit{CFMM}(\Phi)$, as long as for any intermediate reserve state $(R_{x,i},R_{y,i})$ after an order, the following holds:
\begin{equation}\label{eq:collateralRestriction}
    C_x \geq \beta(R_{x,0}-R_{x,i}) \text{ and } C_y \geq \beta(R_{y,0}-R_{y,i}).
\end{equation}

For end of block pool reserves $(R_{x,1},R_{y,1})$, WLOG let $\Upsilon_y=R_{y,1}-R_{y,0}>0$, and $\Upsilon_x=R_{x,0}-R_{x,1}>0$ (the executed orders net bought $x$ from $\Phi$, and net sold $y$ to $\Phi$). The protocol then removes $\beta \Upsilon_y$ tokens from $\Phi$, sending them to the arbitrageur, and adds $\beta \Upsilon_x$ tokens to $\Phi$, taking these tokens from $C_x$. After this, it can be seen that $ \textit{PPF}(R_{x,1}+\beta \Upsilon_x, R_{y,1}-\beta \Upsilon_y) < p_1$. To ensure the reserves correspond to a $\textit{PPF}$ equal to $p_1$, a further $\upsilon_x>0$ tokens are removed such that:
\begin{equation}\label{eq:finalReserveDef}
        \textit{PPF}(R_{x,1}+\beta \Upsilon_x - \upsilon_x, R_{y,1}-\beta \Upsilon_y)=p_1.\footnote{Achievable as a result of properties(a) and (b) of Equation \ref{eq:pricerestrictions}.} 
    \end{equation}
These $\upsilon_x$ tokens are added to the \textit{vault} of $\Phi$. Summarizing the transition from $t=0$ to $t=1$ from the arbitrageur's perspective, this is equivalent to:
\begin{enumerate}
    \item Adding $(1-\beta)\Upsilon_y$ tokens to $\Phi$ and removing $(1-\beta)\Upsilon_x$ tokens from $\Phi$.
    \item Adding $\upsilon_x>0$ tokens to the $\Phi$ vault from the $\Phi$ pool such that $\textit{PPF}(R_{x,0}-(1-\beta)\Upsilon_x - \upsilon_x, R_{y,0}+(1-\beta) \Upsilon_y)=p_1.$ Note, this is with respect to starting reserves. \footnote{If $\Upsilon_y>0$ tokens are to be removed from $\textit{CFMM}(\Phi)$ with $\Upsilon_x>0$ tokens to be added in order to achieve $p_1$, then $(1-\beta)\Upsilon_y$ tokens are removed from $\Phi$ and $(1-\beta)\Upsilon_x$ tokens are added to $\Phi$, with a further $\upsilon_y>0$ removed from $\Phi$ and added to the vault such that $\textit{PPF}(R_{x,0}+(1-\beta)\Upsilon_x,R_{y,0}-(1-\beta) \Upsilon_y-\upsilon_y)=p_1$.}
\end{enumerate} 

If only a single arbitrageur order is executed on $\Phi$ apart from the pool unlock transaction, the arbitrageur receives $\Upsilon_x$ tokens from the order, and must repay $\beta \Upsilon_x$ as a result of the pool unlock transaction. Any other sequence of orders resulting in a net $\Upsilon_x$ tokens being removed from $\Phi$ is possible, but $\beta \Upsilon_x$ tokens must always be repaid to the pool by the arbitrageur. As the arbitrageur has full control over which orders are executed, such sequences of orders must be at least as profitable for the arbitrageur as the single arbitrage order sequence. \footnote{An example of such a sequence is an arbitrage order to the external market price, followed by a sequence of order pairs, with each pair a user order, followed by an arbitrageur order back to the external market price. There are arbitrarily many other such sequences.}

\subsubsection*{Vault Rebalance.}

After the above process, let there be $(v_x,v_y) \in \mathbb{R}_{+}^2$ tokens in the vault of $\Phi$. If $v_y \MIFP_1 > v_x $, add $(v_x, \frac{v_x}{\MIFP_1}) $ tokens into $\Phi$ from the vault. Otherwise, add $(v_y \MIFP_1, v_y)$ tokens into $\Phi$ from the vault. This is a \textit{vault rebalance}.

Every $\uptau$ blocks, after the vault rebalance, the protocol converts half of the tokens still in the vault of $\Phi$ (there can only be one token type in the vault after a vault rebalance) into the other token in $\Phi$ according to one of either conversion process 1 (Section \ref{sec:Diamond:conv1}) or 2 (Section \ref{sec:Diamond:conv2}). The goal of the conversion processes is to add the Diamond vault tokens back into the Diamond liquidity pool in a ratio corresponding to the $\MIFP$, while preserving the value of the tokens to be added to the pool. 

To understand why half of the tokens are converted, assume WLOG that there are $v_x$ tokens in the vault.  Given an external market price $\MIFP$, $\frac{v_x}{2}$ tokens can be exchanged for $v_y=\frac{v_x}{2}\frac{1}{\MIFP}$ tokens, and vice versa. Both conversion processes are constructed to ensure the expected revenue of conversion is at least $v_y=\frac{v_x}{2}\frac{1}{\MIFP}$. Therefore, after conversion, there are at least $\frac{v_x}{2}$ and $v_y=\frac{v_x}{2}\frac{1}{\MIFP}$ tokens in the vault, with $\frac{\frac{v_x}{2}}{v_y}=\MIFP$. The conversion processes then add the unconverted $\frac{v_x}{2}$ and converted $v_y$ tokens back into the $\Phi$ pool, with the ratio of these tokens approximating the external market price. Importantly, these tokens have value of at least the original vault tokens $v_x$.

\subsection{Per-block Conversion vs. Future Contracts} \label{sec:Diamond:conv1}

After every arbitrage, the arbitrageur converts $\eta$ equal to half of the total tokens in the vault at the pool price $p_c$. Simultaneously, the arbitrageur sells to the pool $\eta$ future contracts in the same token denomination at price $p_c$. Given the pool buys $\eta$ future contracts at conversion price $p_c$, and the futures settle at price $p_T$, the protocol wins $\eta (p_T-p_c)$. 

These future contracts are settled every $\uptau$ blocks, with the net profit or loss being paid in both tokens, such that for a protocol settlement profit of $PnL$ measured in token $x$ and pool price $p_T$, the arbitrageur pays $(s_x,s_y)$ with $PnL=s_x+s_y p_T$ and $s_x= s_y p_T$. These contracts can be settled in one of the following (non-exhaustive) ways:
\begin{enumerate}
    \item Every $\uptau$ blocks, an auction takes place to buy the offered tokens from the arbitrageurs who converted the pool at the prices at which the conversions took place. For a particular offer, a positive bid implies the converter lost/the pool won to the futures. In this case the converter gives the tokens to the auction winner, while the pool receives the winning auction bid. A negative bid implies the converter won/the pool lost to the futures. In this case, the converter must also give the tokens to the auction winner, while the pool must pay the absolute value of the winning bid to the auction winner.
    \item Every $\uptau$ blocks, a blockchain-based frequent batch auction takes place in the swap corresponding to the pool swap. The settlement price of the frequent batch auction is used as the price at which to settle the futures.
\end{enumerate}

\subsection{Periodic Conversion Auction} \label{sec:Diamond:conv2}

Every $\uptau$ blocks, $\eta$ equal to half of the tokens in the vault are auctioned to all players in the system, with bids denominated in the other pool token (bids for $x$ tokens in the vault must be placed in $y$ tokens, and vice versa). For winning bid $b$ in token $x$ (or token $y$), the resultant vault quantities described by $(s_x=b, s_y= \eta)$ (or $(s_x=\eta, s_y= b)$) are added to the pool reserves. In this case, unlike in Section \ref{sec:Diamond:conv1}, there are no restrictions placed on $\frac{s_x}{s_y}$.

\section{\texorpdfstring{Diamond}{TEXT} Properties}\label{sec:Diamond:properties}

This section outlines the key properties of Diamond. We first prove that both conversion process have at least 0 expectancy for the protocol.

\begin{restatable}{lemma}{pbc}\label{lem:PBC}
 Converting the vault every block vs. future contracts has expectancy of at least 0 for a Diamond pool.
\end{restatable}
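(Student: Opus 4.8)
The plan is to compute the expected change in the pool's value produced by one per-block conversion together with its matching futures contract, show it is nonnegative conditional on the block-$t$ state, and then add up over the $\uptau$ blocks between two consecutive settlements.

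First I would fix notation: without loss of generality suppose the tokens being converted are denominated in $y$ (the $x$-case is symmetric), so at block $t$ the protocol takes $\eta$ tokens of $y$ out of the vault and the block producer converts them at the end-of-block pool price $p_c$, a quantity the block producer fully controls. Following the token flows of Section \ref{sec:Diamond:conv1}: the pool parts with $\eta$ tokens of $y$, whose $x$-denominated value at block $t$ is $\eta\MIFP_t$; in exchange it is credited $\eta p_c$ tokens of $x$ from the conversion trade and, because the block producer is simultaneously forced to sell the pool $\eta$ futures struck at $p_c$, a further $\eta(p_T-p_c)$ tokens of $x$ when those futures settle at price $p_T$ in block $T=t+\uptau$. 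Here I use Assumption 2 (zero risk-free rate), which makes physical tokens and futures interchangeable for valuation and removes discounting. The key observation is that the two credits sum to $\eta p_T$, \emph{independent of $p_c$}: the forced short futures position exactly hedges the pool against the block producer's choice of conversion price, so no manipulation of $p_c$ can move the pool's expectancy.

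It then remains to show $\mathbb{E}[\eta p_T \mid \mathcal F_t] \geq \eta\MIFP_t$, i.e. that the pool recovers at least the block-$t$ fair value of what it gave up. I would treat the two settlement options separately. If the futures settle against the on-chain frequent batch auction (process 2 of Section \ref{sec:Diamond:conv1}), then $p_T$ equals the block-$T$ external price $\MIFP_T$ in expectancy, so by the tower property and the martingale assumption (Assumption 1), $\mathbb{E}[\eta p_T \mid \mathcal F_t]=\eta\,\mathbb{E}[\MIFP_T\mid\mathcal F_t]=\eta\MIFP_t$. If instead the $\eta$ converted tokens are auctioned among arbitrageurs (process 1), Lemma \ref{lem:incentiveCompatibleAuctions} with $z=\eta$ gives that in any Nash equilibrium the auction realizes at least $\eta\MIFP_T$ tokens of $x$ --- equivalently the protocol's futures gain is at least $\eta(\MIFP_T-p_c)$ --- so $\mathbb{E}[\eta p_T\mid\mathcal F_t]\geq\eta\,\mathbb{E}[\MIFP_T\mid\mathcal F_t]=\eta\MIFP_t$ again. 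Either way the expected value change to the pool at block $t$ is $\mathbb{E}[\eta p_T\mid\mathcal F_t]-\eta\MIFP_t\geq 0$. Finally I would note that the vault rebalance preceding the conversion only transfers matched token pairs in the ratio $\MIFP_t$, which coincides with the post-arbitrage pool price and is therefore price-neutral by property (c) of Equation \ref{eq:pricerestrictions} and value-preserving, so it contributes nothing; summing the per-block inequality over the blocks between consecutive settlements via the tower property completes the argument.

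I expect the main difficulty to be bookkeeping rather than a conceptual hurdle: one must pin down precisely which transfers occur at block $t$ versus at the settlement block $T$, verify that the ``protocol wins $\eta(p_T-p_c)$'' accounting of Section \ref{sec:Diamond:conv1} meshes with the conversion trade so that the terms genuinely telescope to $\eta p_T$, and apply the martingale/tower step to $\MIFP$ between the correct blocks. The point worth stating most carefully is robustness to the block producer: because $p_c$ --- the only quantity the block producer can choose strategically in this sub-protocol --- cancels out of the pool's expectancy, the ``at least $0$'' bound holds against arbitrary block-producer behaviour.
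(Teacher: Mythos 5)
Your proposal is correct and follows essentially the same route as the paper's proof: split the operation into the spot conversion leg and the futures leg (whose $p_c$-dependence cancels), invoke the zero risk-free rate and martingale assumptions to get $\mathbb{E}[\MIFP_T]=\MIFP_t$, and handle the two settlement mechanisms via Lemma \ref{lem:incentiveCompatibleAuctions} and the frequent-batch-auction settlement result respectively. The only differences are presentational — you combine the two credits into $\eta p_T$ before taking expectations where the paper computes each leg's expectancy separately, and you add harmless remarks on the vault rebalance and on summing across blocks.
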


\begin{restatable}{lemma}{pca}\label{lem:PCA}
A periodic conversion auction has expectancy of at least 0 for a Diamond pool.
\end{restatable}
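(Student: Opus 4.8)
The plan is to reduce Lemma~\ref{lem:PCA} to a single application of Lemma~\ref{lem:incentiveCompatibleAuctions}. By the vault rebalance step that precedes every periodic conversion, the vault holds tokens of a single denomination; assume without loss of generality it holds $v_x$ tokens of $x$, the case of $v_y$ tokens of $y$ being symmetric after interchanging the roles of $x$ and $y$ and replacing $\MIFP$ by $1/\MIFP$ (recall the price is quoted as quantity of $x$ per $y$). The conversion auction of Section~\ref{sec:Diamond:conv2} then puts $\eta=\tfrac{v_x}{2}$ tokens of $x$ up for sale with bids denominated in $y$, and, for winning bid $b$, adds $(s_x,s_y)=(\eta,b)$ to the pool, with no constraint imposed on $s_x/s_y$.

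First I would track the value the conversion adds to the pool against the value it removes from the vault, both measured in units of token $x$ at the external market price $\MIFP$ prevailing when the conversion is executed (which is known at that point). The vault loses $v_x$; the pool gains $\eta + b\MIFP = \tfrac{v_x}{2} + b\MIFP$. Hence the net value accruing to the pool from one conversion is $b\MIFP - \tfrac{v_x}{2}$, and it suffices to show this is non-negative in expectation, i.e. $\mathbb{E}[b\MIFP]\ge \tfrac{v_x}{2}$.

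Then I would invoke Lemma~\ref{lem:incentiveCompatibleAuctions}, instantiated with the population of arbitrageurs (who by the model assumptions of Section~\ref{sec:Diamond:assumptions} trade frictionlessly at $\MIFP$) as the bidder set and with the auctioned object being $z=\tfrac{v_x}{2}$ tokens of $x$. Applying the mirror image of that lemma --- token $x$ auctioned, bids in $y$, which follows from the stated version by the $x\leftrightarrow y$, $\MIFP\leftrightarrow 1/\MIFP$ relabeling --- gives that the winning bid in any Nash equilibrium satisfies $b\ge z/\MIFP = \tfrac{v_x}{2\MIFP}$. Thus $b\MIFP\ge \tfrac{v_x}{2}$ holds for every realization of $\MIFP$, so the per-conversion net value is non-negative pointwise, hence in expectation; and since each conversion is handled independently this extends to the whole protocol run.

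The main obstacle is bookkeeping rather than substance: one has to map the protocol's auction --- selling exactly half of the single-denomination vault, bids in the opposite token, no restriction on $s_x/s_y$ --- cleanly onto the abstract first-price auction of Lemma~\ref{lem:incentiveCompatibleAuctions} and keep the token denominations straight, so that the lemma's ``$\ge z\MIFP$'' guarantee translates into ``at least $v_x/2$ of value returned.'' The only point requiring a word of care is why ``expectancy at least $0$'' is immediate: the equilibrium lower bound on the winning bid is pointwise in the realized $\MIFP$ (the auction runs after $\MIFP$ is fixed), and the unconverted half of the vault contributes its $\tfrac{v_x}{2}$ of value unchanged, so no assumption on the distribution of $\MIFP$ beyond the martingale property is used.
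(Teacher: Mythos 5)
Your proposal is correct and follows essentially the same route as the paper: reduce the claim to the statement that the winning bid in the conversion auction is at least the external-market value of the auctioned half of the vault, which is exactly Lemma~\ref{lem:incentiveCompatibleAuctions}. The only cosmetic difference is your choice of WLOG (vault in token $x$, requiring the mirrored form of the lemma) where the paper takes the vault in token $y$ so the lemma applies verbatim; your extra bookkeeping of pool-gain versus vault-loss and the remark that the bound is pointwise in the realized $\MIFP$ are sound and slightly more explicit than the paper's version.
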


\begin{corollary}\label{lem:conversion}
    Conversion has expectancy of at least 0 for a Diamond pool.
\end{corollary}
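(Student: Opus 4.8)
The plan is to obtain Corollary~\ref{lem:conversion} as an immediate consequence of Lemmas~\ref{lem:PBC} and~\ref{lem:PCA} via a case split on which conversion process the pool runs. In the core protocol of Section~\ref{sec;description}, every $\uptau$ blocks the tokens remaining in the vault after a vault rebalance are converted by exactly one of two mechanisms: the per-block conversion against future contracts of Section~\ref{sec:Diamond:conv1}, or the periodic conversion auction of Section~\ref{sec:Diamond:conv2}. So the whole task is to observe that these two mechanisms exhaust the meaning of ``conversion'' and then quote the two lemmas.

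Concretely, I would first fix notation: let $X$ denote the net change in the value of the pool's holdings (measured in token $x$ at the prevailing external market price $\MIFP$) attributable to the conversion step, so that ``expectancy of at least $0$'' means $\mathbb{E}[X] \geq 0$. Then I would split into the two cases. If the pool uses the process of Section~\ref{sec:Diamond:conv1}, Lemma~\ref{lem:PBC} gives $\mathbb{E}[X] \geq 0$ directly; here I would note that this already subsumes both settlement variants of that section (settlement by a follow-up auction, and settlement via the frequent-batch-auction price oracle), since Lemma~\ref{lem:PBC} is stated for the mechanism as a whole rather than for a particular settlement sub-case. If instead the pool uses the process of Section~\ref{sec:Diamond:conv2}, Lemma~\ref{lem:PCA} gives $\mathbb{E}[X] \geq 0$. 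Since the two cases are exhaustive, $\mathbb{E}[X] \geq 0$ in all cases, which is exactly the claim.

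There is essentially no hard step: the corollary is a bookkeeping statement, and all the real content lives in the two lemmas it cites. The only point needing a line of care is that the decomposition be clean --- that the conversion step is well-defined independently of the arbitrage and vault-rebalance steps preceding it, so that taking expectations over the conversion randomness (auction outcomes, the martingale evolution of $\MIFP$ up to settlement) is legitimate and the ``at least $0$'' bounds of the two lemmas can be invoked verbatim. Given Assumptions~1--2 of Section~\ref{sec:Diamond:assumptions} (martingale external prices, zero risk-free rate), this conditioning is routine, so I expect the proof to be a two-line case analysis with no genuine obstacle.
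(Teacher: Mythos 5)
Your proposal is correct and matches the paper's approach: the corollary is stated as an immediate consequence of Lemmas~\ref{lem:PBC} and~\ref{lem:PCA}, with the paper giving no separate proof precisely because the two lemmas exhaust the two conversion mechanisms. Your case split and the remark that Lemma~\ref{lem:PBC} covers both settlement variants are exactly the intended reading.
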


With these results in hand, we now prove the main result of the paper. That is, the LVR of a Diamond pool is $(1-\beta)$ of the corresponding CFMM pool.

\begin{restatable}{theorem}{main}\label{thm:main}
For a CFMM pool $\textit{CFMM}(\Phi)$ with LVR of $L>0$, the LVR of $\Phi$, the corresponding pool in Diamond, has expectancy of at most $(1-\beta)L$.
\end{restatable}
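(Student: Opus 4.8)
The plan is to prove the bound one block at a time and then aggregate. Fix a block in which $\Phi$ has reserves $(R_{x,0},R_{y,0})$ at $t=0$ and the external market price at $t=1$ is $\MIFP_1$; by construction $\textit{CFMM}(\Phi)$ has these same reserves, so by Equation~\ref{eq:LVR} its LVR for this block is $L=R_{x,0}+R_{y,0}\MIFP_1-V(\MIFP_1)$. Writing the optimal CFMM arbitrage move (WLOG, as in Section~\ref{sec;description}) as removing $\Upsilon_x>0$ of $x$ and adding $\Upsilon_y>0$ of $y$ to reach the reserves attaining $V(\MIFP_1)$, which have pool price exactly $\MIFP_1$ (the first-order condition for Equation~\ref{eq:poolOptimize}), gives $L=\Upsilon_x-\Upsilon_y\MIFP_1$. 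First I would observe that the rational arbitrageur in $\Phi$ performs the same net move: by the pool-unlock and repayment rules (Equation~\ref{eq:collateralRestriction} and the transfers following it, summarised in Section~\ref{sec;description}), executing a $\textit{CFMM}(\Phi)$-equivalent trade that targets a pool price $p_1$ nets the arbitrageur exactly $(1-\beta)$ of the corresponding CFMM arbitrage profit, i.e. $(1-\beta)\big(\Upsilon_x-\Upsilon_y\MIFP_1\big)$ measured at $\MIFP_1$ for the $\Upsilon$'s associated with $p_1$; since $1-\beta>0$ is a constant, this is maximised at the same $p_1=\MIFP_1$ that maximises the CFMM arbitrage, with value $(1-\beta)L$. (For the upper bound alone it suffices that this quantity is $\le(1-\beta)L$ for every target price, by optimality of $V(\MIFP_1)$ in Equation~\ref{eq:poolOptimize}.)

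Next I would value the whole of $\Phi$ --- its pool reserves together with its vault --- at the new price $\MIFP_1$, exactly as in Equation~\ref{eq:LVR}. By Equation~\ref{eq:finalReserveDef} the end-of-block pool reserves are $\big(R_{x,0}-(1-\beta)\Upsilon_x-\upsilon_x,\ R_{y,0}+(1-\beta)\Upsilon_y\big)$ for some $\upsilon_x>0$, with pool price exactly $\MIFP_1$, and the vault holds precisely $\upsilon_x$ tokens of $x$. Hence
\begin{equation*}
\big(R_{x,0}-(1-\beta)\Upsilon_x-\upsilon_x\big)+\big(R_{y,0}+(1-\beta)\Upsilon_y\big)\MIFP_1+\upsilon_x = R_{x,0}+R_{y,0}\MIFP_1-(1-\beta)\big(\Upsilon_x-\Upsilon_y\MIFP_1\big),
\end{equation*}
so the pool-plus-vault value of $\Phi$, measured at $\MIFP_1$, has decreased by exactly $(1-\beta)L$ relative to holding $(R_{x,0},R_{y,0})$. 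In particular the $\upsilon_x$ tokens diverted to the vault are not lost value: they are counted here at face value and will be returned to the pool.

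It then remains to check that the vault machinery does not add to the loss in expectation. The vault rebalance re-adds tokens to $\Phi$ in the price ratio $\MIFP_1$, equal to the current pool price, so by property (c) of Equation~\ref{eq:pricerestrictions} it changes neither the pool price nor the combined pool-plus-vault value. The leftover single-denomination vault balance is eventually re-added through a conversion process whose expectancy for $\Phi$ is at least $0$ by Corollary~\ref{lem:conversion}; thus the expected value the pool receives back is at least the face value already credited to the vault, and no further value is lost in expectation. Combining the three steps, the expected LVR of $\Phi$ over the block is at most $(1-\beta)L$, and summing over blocks --- using linearity of expectation and the fact that $\textit{CFMM}(\Phi)$ is re-synchronised to $\Phi$'s current reserves each block, so that $L$ is always the contemporaneous CFMM LVR --- gives the stated aggregate bound.

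The step I expect to be the main obstacle is the vault bookkeeping: making precise that relocating $\upsilon_x$ tokens from pool to vault is value-neutral once the vault is included in $\Phi$'s holdings and valued at $\MIFP_1$, that the vault rebalance is simultaneously price- and value-preserving via property (c), and that Corollary~\ref{lem:conversion} can legitimately be invoked to close the remaining gap so that the only net expected loss is the $(1-\beta)L$ paid to the arbitrageur. A secondary point needing care is confirming the end-of-block pool price is exactly $\MIFP_1$ (so that the $\MIFP_1$-valuation is tight) and that $\Upsilon_x-\Upsilon_y\MIFP_1$ equals the CFMM LVR $L$ through the optimisation defining $V$ in Equation~\ref{eq:poolOptimize}.
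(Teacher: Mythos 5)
Your proposal is correct and follows essentially the same route as the paper's proof: the arbitrageur's Diamond payoff is $(1-\beta)$ times the corresponding CFMM arbitrage objective, hence maximized at the same reserves and bounded by $(1-\beta)L$, with Corollary~\ref{lem:conversion} closing the gap left by the vault and conversion machinery. Your explicit pool-plus-vault valuation identity at $\MIFP_1$ (and the check that the vault rebalance is price- and value-neutral via property (c)) is a useful extra verification that the paper leaves implicit, but it does not change the argument.
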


\section{Implementation}\label{sec:Diamond:implementation}

We now detail an implementation of Diamond. The main focus of our implementation is ensuring user experience in a Diamond pool is not degraded compared to the corresponding CFMM pool. To this point, applying a $\beta$-discount on every Diamond pool trade is not viable. To avoid this, we only consider LVR on a per-block, and not a per-transaction basis. Given the transaction sequence, in/exclusion and priority auction capabilities of block producers, block producers can either capture the block LVR of a Diamond pool themselves, or auction this right among arbitrageurs. 

From an implementation standpoint, who captures the LVR is not important, whether it is the producer themselves, or an arbitrageur who won an auction to bundle the transactions for inclusion in Diamond. As mentioned already, we assume these are the same entity, and as such it is the arbitrageur who must repay the LVR of a block. To enforce this, for a Diamond pool, we check the pool state in the first pool transaction each block and take escrow from the arbitrageur. This escrow is be used in part to pay the realized LVR of the block back to the pool. The first pool transaction also returns the collateral of the previous arbitrageur, minus the realized LVR (computable from the difference between the current pool state and the pool state at the beginning of the previous block). To ensure the collateral covers the realized LVR, each proceeding pool transaction verifies that the LVR implied by the pool state as a result of the transaction can be repaid by the deposited collateral. 

We can reduce these collateral restrictions by allowing the arbitrageur to bundle transactions based on a \textit{coincidence-of-wants} (CoWs) (matching buy and sell orders, as is done in CoWSwap \cite{CoWSwapWebsite}). This can effectively reduce the required collateral of the arbitrageur to 0. Given the assumed oversight capabilities of arbitrageurs is the same as that of block producers, we do not see collateral lock-up intra-block as a restriction, although solutions like CoWs are viable alternatives.

Our implementation is based on the following two assumptions:
\begin{enumerate}
    \item An arbitrageur always sets the final state of a pool to the state which maximizes the LVR. 
    \item The block producer realizes net profits of at least the LVR corresponding to the final state of the pool, either as the arbitrageur themselves, or by auctioning the right to arbitrage amongst a set of competing arbitrageurs. 
\end{enumerate}

If the final price of the block is not the price maximizing LVR, the arbitrageur has ignored an arbitrage opportunity. The arbitrageur can always ignore non-arbitrageur transactions to realize the LVR, therefore, any additional included transactions must result in greater or equal utility for the arbitrageur than the LVR.

\subsection{Core Protocol}

The first transaction interacting with a Diamond pool $\pool$ in every block is the \textit{pool unlock transaction}, which deposits some collateral, $(C_x, C_y)\in \mathbb{R}_{+}^2  $. Only one pool unlock transaction is executed per pool per block. Every proceeding user order interacting with $\pool$ in the block first verifies that the implied pool move stays within the bounds of Equation \ref{eq:collateralRestriction}. Non pool-unlock transactions are executed as they would be in the corresponding CFMM pool $\textit{CFMM}(\pool)$ (without a $\beta$ discount on the amount of tokens that can be removed). These transactions are executed at prices implied by pushing along the $\textit{CFMM}(\pool)$ curve from the previous state, and as such, the ordering of transactions intra-block affects the execution price. If a Diamond transaction implies a move outside of the collateral bounds, it is not executed. 

The next time a pool unlock transaction is submitted (in a proceeding block), given the final price of the preceding block was $p_1$, the actual amount of token $x$ or $y$ required to be added to the pool and vault (the $\beta \Upsilon $ and $\upsilon$ of the required token, as derived earlier in the section) is taken from the deposited escrow, with the remainder returned to the arbitrageur who deposited those tokens.

\begin{remark}
  Setting the LVR rebate parameter too high can result in protocol censorship and/or liveness issues as certain arbitrageurs may not be equipped to frictionlessly arbitrage, and as such, repay the implied LVR to the protocol. To counteract this, the LVR rebate parameter should be reduced every block in which no transactions take place. As arbitrageurs are competing through the block producers to extract LVR from the pool, the LVR rebate parameter will eventually become low enough for block producers to include Diamond transactions. After transactions have been executed, the LVR rebate parameter should be reset to its initial value. Rigorous testing of initial values and decay curves are required for any choice of rebate parameter. 
\end{remark}

\subsection{Conversion Protocols}

The described implementations in this section assume the existence of a decentralized on-chain auction.\footnote{First-price sealed-bid auctions can be implemented using a commit-reveal protocol. An example of such a protocol involves bidders hashing bids, committing these to the blockchain along with an over-collaterlization of the bid, with bids revealed when all bids have been committed.} 

\subsubsection{Per-block Conversion vs. Futures}

Given per-block conversion (Section \ref{sec:Diamond:conv1}), further deposits from the arbitrageur are required to cover the token requirements of the conversion and collateralizing the futures. The conversions for a pool $\pool$ resulting from transactions in a block take place in the next block a pool unlock transaction for $\pool$ is called. Given a maximum expected percentage move over $\uptau$ blocks of $\sigma_T$, and a conversion of $\lambda_y$ tokens at price $p$, the arbitrageur collateral must be in quantities $\pi_x$ and $\pi_y$ such that if the arbitrageur is long the futures:
\begin{equation}
    1. \ \pi_x+\pi_y\frac{p}{1+\sigma_T}\geq\lambda_y (p-\frac{p}{1+\sigma_T}), \  \text{and} \ 2. \ 
    \frac{\pi_x}{\pi_y}=\frac{p}{1+\sigma_T}. \ \ \ \ 
\end{equation}
If the arbitrageur is short the futures it must be that:
\begin{equation}
    1. \ \pi_x+\pi_y p(1+\sigma_T)\geq \lambda_y p\sigma_T, \ \ \ \ \ \ \ \ \ \ \ \ \ \text{and} \  2. \
    \frac{\pi_x}{\pi_y}=p(1+\sigma_T).
\end{equation}
The first requirement in both statements is for the arbitrageur's collateral to be worth more than the maximum expected loss. The second requirement states the collateral must be in the ratio of the pool for the maximum expected loss (which also ensures it is in the ratio of the pool for any other loss less than the maximum expected loss). This second requirement ensures the collateral can be added back into the pool when the futures are settled.

At settlement, if the futures settle in-the-money for the arbitrageur, tokens are removed  from the pool in the ratio specified by the settlement price with total value equal to the loss incurred by the pool, and paid to the arbitrageur. If the futures settle out-of-the-money, tokens are added to the pool from the arbitrageur's collateral in the ratio specified by the settlement price with total value equal to the loss incurred by the arbitrageur. The remaining collateral is returned to the arbitrageur. The pool constant is adjusted to reflect the new balances.

\begin{remark}
  As converting the vault does not affect pool availability, the auctions for converting the vault can be run sufficiently slowly so as to eliminate the risk of block producer censorship of the auction. We choose to not remove tokens from the pool to collateralize the futures as this reduces the available liquidity within the pool, which we see as an unnecessary reduction in benefit to users (which would likely translate to lower transaction fee revenue for the pool). For high volatility token pairs, $\uptau$ should be chosen sufficiently small so as to not to risk pool liquidation. 

  If Diamond with conversion versus futures is run on a blockchain where the block producer is able to produce multiple blocks consecutively, this can have an adverse effect on incentives. Every time the vault is converted and tokens are re-added to the pool, the liquidity of the pool increases. A block producer with control over multiple blocks can move the pool price some of the way towards the maximal LVR price, convert the vault tokens (which has 0 expectancy from Lemma \ref{lem:PBC}), increase the liquidity of the pool, then move the pool towards the maximal LVR price again in the proceeding block. This process results in a slight increase in value being extracted from the pool in expectancy compared to moving the pool price immediately to the price corresponding to maximal LVR. Although the effect on incentives is small, re-adding tokens from a conversion slowly/keeping the pool constant unchanged mitigates/removes this benefit for such block producers.
\end{remark}

\subsubsection{Periodic Conversion Auction}

Every $\uptau$ blocks, $\eta$ equal to half the tokens in the vault are auctioned off, with bids denominated in the other token. The winning bidder receives these $\eta$ tokens. The winning bid, and the remaining $\eta$ tokens in the vault, are re-added to the pool.

\begin{figure}[t]
    \centering
    \begin{minipage}{0.45\textwidth}
        \centering
        \includegraphics[scale=0.4]{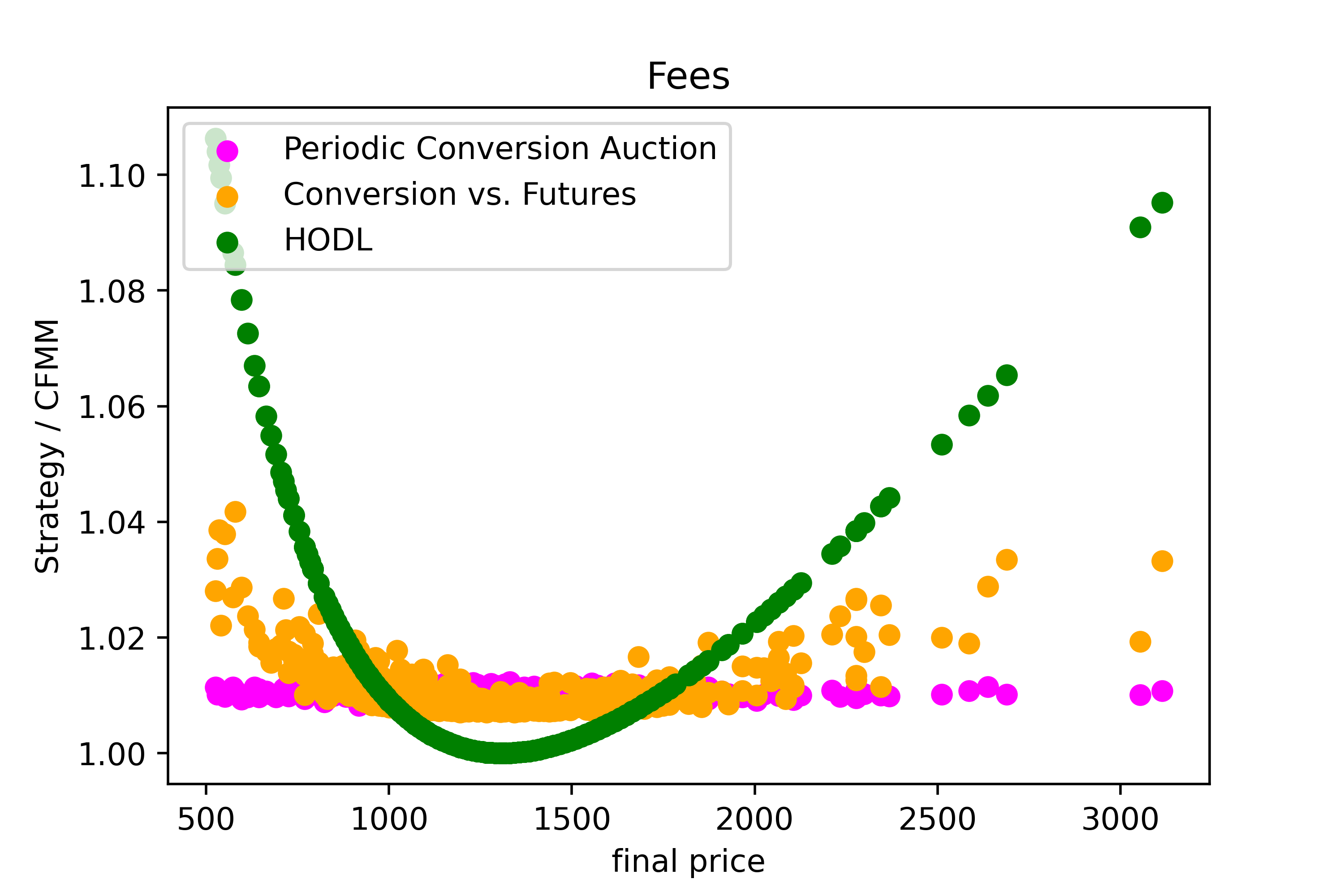}
        \caption{ }
        \label{fig:StrategyComparison}
    \end{minipage}\hfill
    \begin{minipage}{0.45\textwidth}
        \includegraphics[scale=0.4]{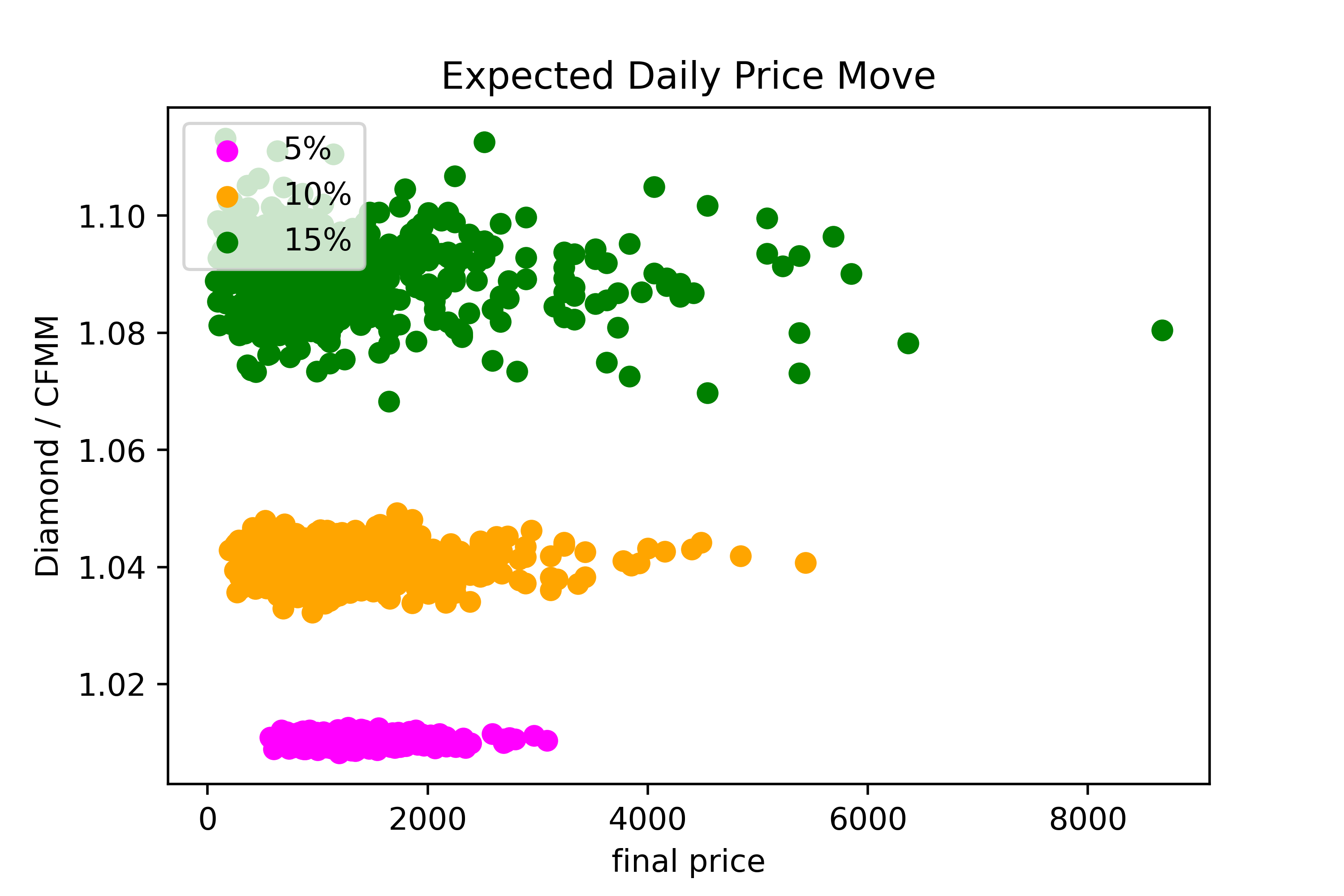}
        \caption{ }
        \label{fig:VolatilityComparison}
    \end{minipage}
\end{figure}

\section{Experimental Analysis}\label{sec:Diamond:sims}

This section presents the results of several experiments, which can be reproduced using the following public repository \cite{DiamondprotocolGithub}. The results provide further evidence of the performance potential of a Diamond pool versus various benchmarks. These experiments isolate the effect that different fees, conversion frequencies, daily price moves, LVR rebate parameters, and days in operation have on a Diamond pool. Each graph represents a series of random-walk simulations which were run, unless otherwise stated, with base parameters of:
\vspace{-0.11cm}
\begin{itemize}
    \item LVR rebate parameter: $0.95$.
    \item Average daily price move: $5\%$.
    \item Conversion frequency: Once per day.
    \item Blocks per day: $10$.
    \item Days per simulation: $365$.
    \item Number of simulations per variable: $500$.
\end{itemize}
\vspace{-0.5cm}
\subsubsection*{Parameter Intuition.} For a Diamond pool to be deployed, we expect the existence of at least one tradeable and liquid external market price. As such, many competing arbitrageurs should exist, keeping the LVR parameter close to 1. $5\%$ is a typical daily move for our chosen token pair. Given a daily move of $5\%$, the number of blocks per day is not important, as the per block expected moves can be adjusted given the daily expected move. Given a simulator constraint of 5,000 moves per simulation, we chose $10$ blocks per day for a year, as opposed to simulating Ethereum over 5,000 blocks (less than 1 day's worth of blocks), as the benefits of Diamond are more visible over a year than a day. 
\vspace{0.1cm}

Each graph plots the final value of the Diamond Periodic Conversion Auction pool (unless otherwise stated) relative to the final value of the corresponding Uniswap V2 pool. The starting reserve values are $\$100m$ USDC and $76,336$ ETH, for an ETH price of $\$1,310$, the approximate price and pool size of the Uniswap ETH/USDC pool at the time of simulation \cite{UniswapWebsite}. 
Figure \ref{fig:StrategyComparison} compares four strategies over the same random walks. Periodic Conversion Auction and Conversion vs. Futures replicate the Diamond protocol given the respective conversion strategies (see Section \ref{sec;protocol}).    HODL (Hold-On-for-Dear-Life), measures the performance of holding the starting reserves until the end of the simulation. The final pool value of these three strategies are then taken as a fraction of the corresponding CFMM pool following that same random walk. Immediately we can see all three of these strategies outperform the CFMM strategy in all simulations (as a fraction of the CFMM pool value, all other strategies are greater than 1), except at the initial price of 1310, where HODL and CFMM are equal, as expected. 

\begin{figure}[t]
    \centering
    \begin{minipage}{0.45\textwidth}
        \centering
        \includegraphics[scale=0.4]{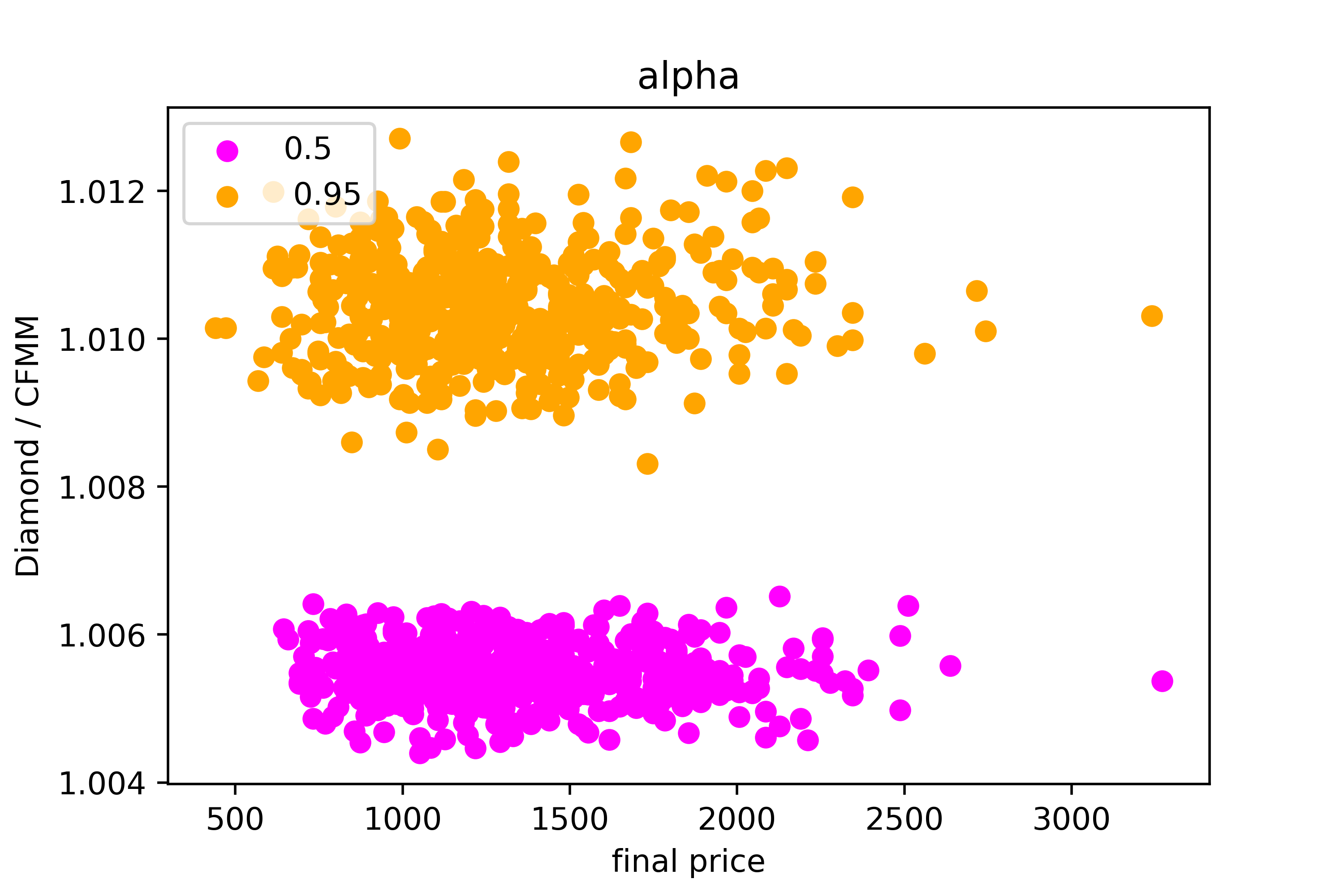}
        \caption{}
        \label{fig:AlphaComparison}
    \end{minipage}\hfill
    \begin{minipage}{0.45\textwidth}
        \includegraphics[scale=0.4]{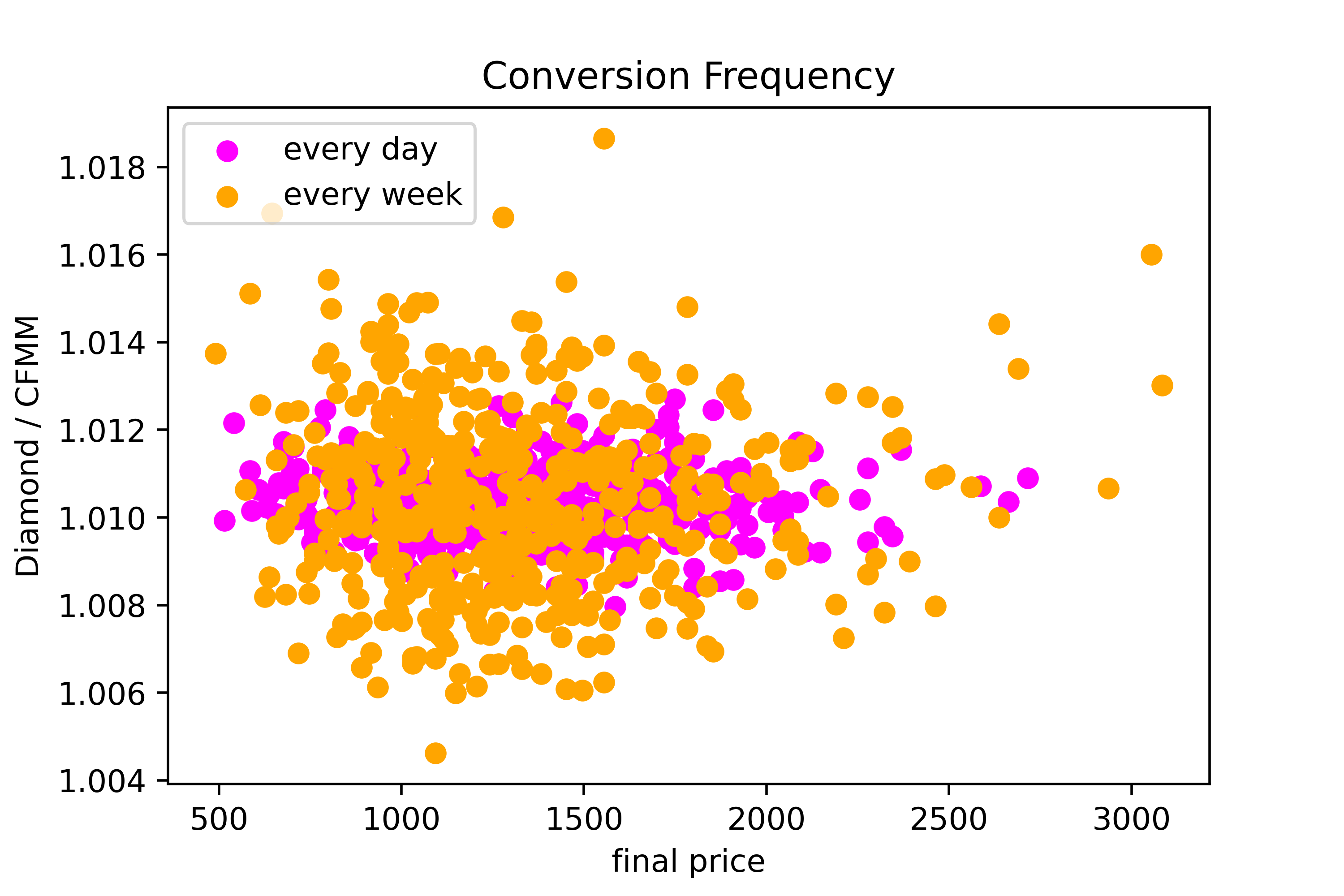}
        \caption{}
        \label{fig:ConversionFrequencyComparison}
    \end{minipage}
\end{figure}

The Diamond pools outperform HODL in a range around the starting price, as Diamond pools initially retain the tokens increasing in value (selling them eventually), which performs better than HODL when the price reverts. HODL performs better in tail scenarios as all other protocols consistently sell the token increasing in value on these paths. Note Periodic Conversion slightly outperforms Conversion vs. Futures when finishing close to the initial price, while slightly underperforming at the tails. This is because of the futures exposure. Although these futures have no expectancy for the protocol, they increase the variance of the Conversion vs. Futures strategy, outperforming when price changes have momentum, while underperforming when price changes revert.
 
Figure \ref{fig:VolatilityComparison} identifies a positive relationship between the volatility of the price and the out-performance of the Diamond pool over its corresponding CFMM pool. This is in line with the results of \cite{LVRRoughgarden} where it is proved LVR grows quadratically in volatility. Figure \ref{fig:AlphaComparison} demonstrates that, as expected, a higher LVR rebate parameter $\beta$ retains more value for the Diamond pool.

Figure \ref{fig:ConversionFrequencyComparison} shows that higher conversion frequency (1 day) has less variance for the pool value (in this experiment once per day conversion has mean 1.011234 and standard deviation 0.000776 while once per week conversion has mean 1.011210 and standard deviation 0.002233). This highlights an important trade-off for protocol deployment and LPs. Although lower variance corresponding to more frequent conversion auctions is desirable, more frequent auctions may centralize the players participating in the auctions due to technology requirements. This would weaken the competition guarantees needed to ensure that the auction settles at the true price in expectancy. 

\begin{figure}[t]
    \centering
    \begin{minipage}{0.45\textwidth}
        \centering
        \includegraphics[scale=0.4]{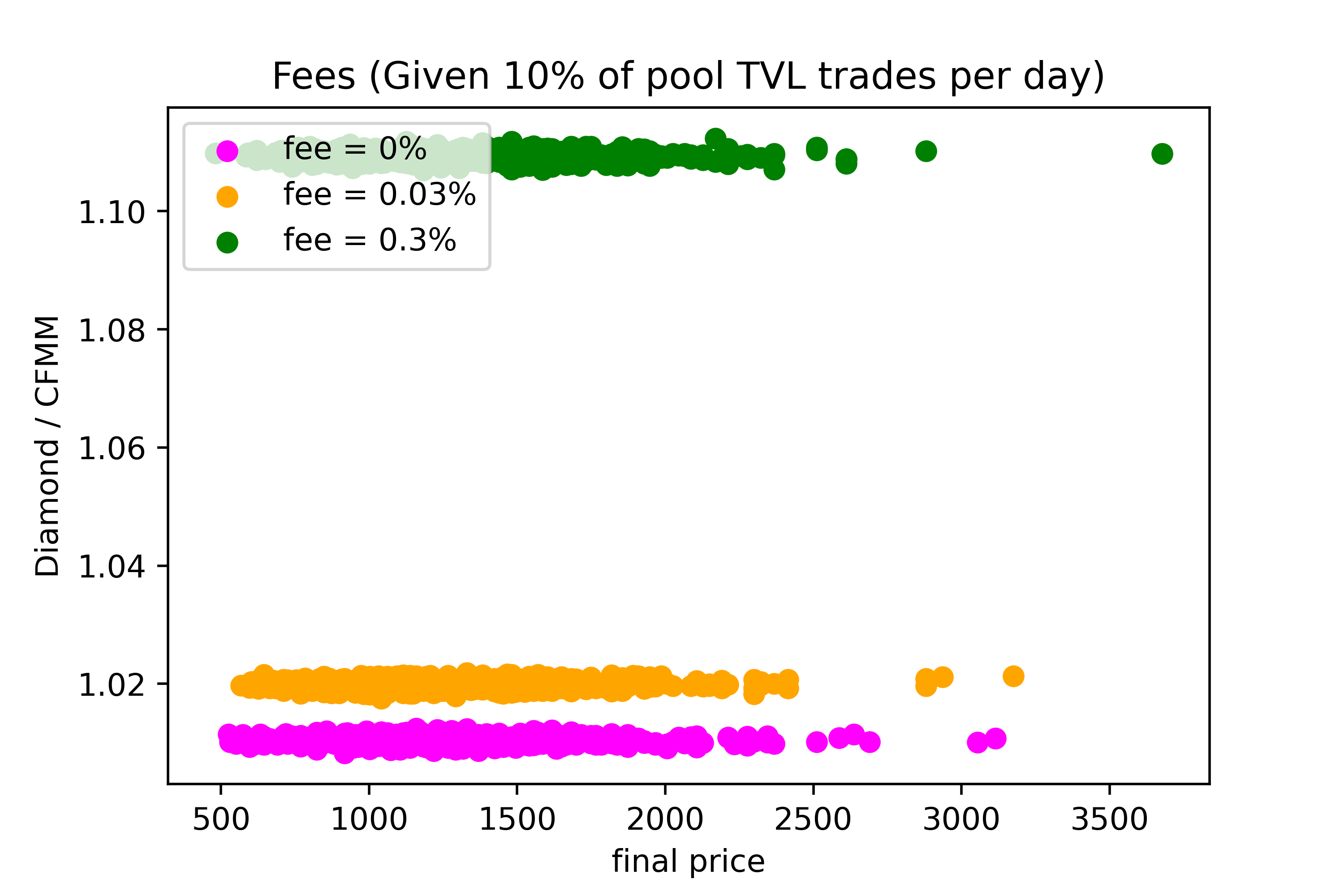}
        \caption{ }
        \label{fig:TransactionFeeComparison}
    \end{minipage}\hfill
    \begin{minipage}{0.45\textwidth}
        \includegraphics[scale=0.4]{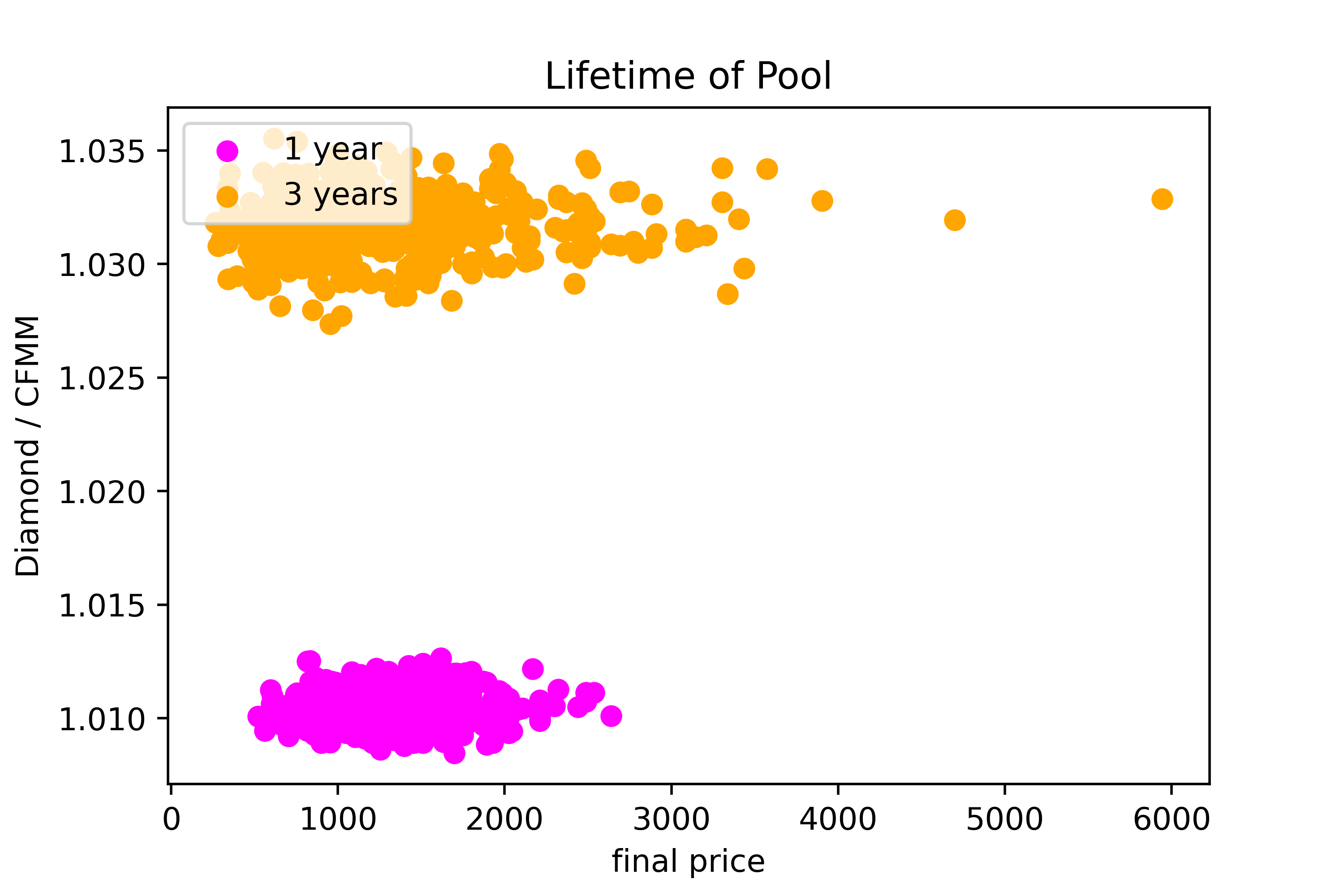}
        \caption{ }
        \label{fig:PoolLifetimeComparison}
    \end{minipage}
\end{figure}

Figure \ref{fig:TransactionFeeComparison} compares Diamond to the CFMM pool under the specified fee structures (data-points corresponding to a particular fee apply the fee to both the Uniswap pool and the Diamond pool) assuming $10\%$ of the total value locked in each pool trades daily. The compounding effect of Diamond's LVR rebates with the fee income every block result in a significant out-performance of the Diamond protocol as fees increase. This observation implies that given the LVR protection provided by Diamond, protocol fees can be reduced significantly for users, providing a further catalyst for a DeFi revival. 
Figure \ref{fig:PoolLifetimeComparison} demonstrates that the longer Diamond is run, the greater the out-performance of the Diamond pool versus its corresponding CFMM pool.

\section{Conclusion}\label{sec:Diamond:conclusion}

We present Diamond, an AMM protocol which provably protects against LVR.  The described implementation of Diamond stands as a generic template to address LVR in any CFMM. The experimental results of Section \ref{sec:Diamond:sims} provide strong evidence in support of the LVR protection of Diamond, complementing the formal results of Section \ref{sec:Diamond:properties}. It is likely that block producers will be required to charge certain users more transaction fees to participate in Diamond pools to compensate for this LVR rebate, with informed users being charged more for block inclusion than uninformed users. As some or all of these proceeds are paid to the pool with these proceeds coming from informed users, we see this as a desirable outcome. 

\section{Acknowledgements}

We thank the reviewers for their detailed and insightful reviews, as well as Stefanos Leonardos for his guidance in preparing this camera-ready version. 
This paper is part of a project that has received funding from the European Union's Horizon 2020 research and innovation programme under grant agreement number 814284, and is supported by the AEI-PID2021-128521OB-I00 grant of the Spanish Ministry of Science and Innovation.

\addcontentsline{toc}{section}{Bibliography}
\bibliographystyle{splncs04}
\bibliography{references}

\appendix
\section{Proofs}\label{sec:proofs}

\auction*

\begin{proof} By construction, we have that the support of $X_i$ is lower bounded by $z\MIFP$. Therefore, in a second-price auction, in equilibrium, each player will bid at least, $z\MIFP$. Using the revenue equivalence theorem \cite{krishna2009auction}, we deduce that the revenue of the seller is at least $z\MIFP$ obtaining the result.
\end{proof}

\pbc*

\begin{proof}
    Consider a conversion of $\eta$ tokens which takes place at time 0. Let the conversion be done at some price $p_c$, while the external market price is $\MIFP_0$. WLOG let the protocol be selling $\eta$ $y$ tokens in the conversion, and as such, buying $\eta$ $y$ token futures at price $p_c$. 
    The token sells have expectancy $\eta (p_c-\MIFP_0)$. For the strategy to have at least 0 expectancy, we need the futures settlement to have expectancy of at least $\eta (\MIFP_0-p_c)$. In Section \ref{sec:Diamond:conv1}, two versions of this strategy were outlined. We consider both here. In both sub-proofs, we use the assumption that the risk-free rate is 0, which coupled with our martingale assumption for $\MIFP$ means the external market price at time $t$ is such that $\mathbb{E}(\MIFP_t)=\MIFP_0$. We now consider the two options for settling futures outlined in Section \ref{sec:Diamond:conv1}
    
    \textbf{Option 1: Settle futures by auctioning tokens at the original converted price.} The arbitrageur who converted tokens for the pool at price $p_c$ must auction off the tokens at price $p_c$. Let the auction happen at time $t$, with external market price at that time of $\MIFP_t$. Notice that what is actually being sold is the right, and obligation, to buy $\eta$ tokens at price $p_c$. This has value $\eta(\MIFP_t-p_c)$, which can be negative. As negative bids are paid to the auction winner by the protocol, and positive bids are paid to the protocol, we are able to apply Lemma \ref{lem:incentiveCompatibleAuctions}. As such, the winning bid is at least $\eta(\MIFP_t-p_c)$, which has expectancy of at least
    \begin{equation}
        \mathbb{E}(\eta(\MIFP_t-p_c))=\eta(\mathbb{E}(\MIFP_t)-p_c)=\eta(\MIFP_0-p_c).
    \end{equation}
    Thus the expectancy of owning the future for the protocol is at least $\eta (\MIFP_0-p_c)$, as required. 
    
    \textbf{Option 2: Settle futures using frequent batch auction settlement price.} For a swap with external market price $\MIFP_t$ at time $t$, a batch auction in this swap settles at $\MIFP_t$ in expectancy (Theorem 5.1 in \cite{FairTraDEXMcMenamin}). Thus the futures owned by the protocol have expectancy
    \begin{equation}
        \mathbb{E}(\eta(\MIFP_t-p_c))=\eta(\mathbb{E}(\MIFP_t)-p_c)=\eta(\MIFP_0-p_c).
    \end{equation}
\end{proof}

\pca*

\begin{proof}
    Consider a Diamond pool $\Phi$ with vault containing $2\eta$ tokens. WLOG let these be of token $y$. Therefore the pool must sell $\eta$ tokens at the external market price to balance the vault. Let the conversion auction accept bids at time $t$, at which point the external market price is $\MIFP_t$. For the auction to have expectancy of at least 0, we require the winning bid to be at least $\eta \MIFP_t$. The result follows from Lemma \ref{lem:incentiveCompatibleAuctions}. 
\end{proof}

\main*

\begin{proof}
    To see this, we first know that for $CFMM(\Phi)$ at time $t$ with reserves $(R_{x,t},R_{y,t})$, LVR corresponds to the optimal solution $(R^*_{x,t+1},R^*_{y,t+1}) $ with external market price $\MIFP_{t+1}$ which maximizes:
    \begin{equation}
        (R_{x,t+1}-R_{x,t})+ (R_{y,t+1}-R_{y,t})\MIFP_{t+1}.
    \end{equation}
    Let this quantity be
    \begin{equation}\label{eq:normalLVR}
        L=(R^*_{x,t+1}-R_{x,t})+ (R^*_{y,t+1}-R_{y,t})\MIFP_{t+1}.
    \end{equation}
    In Diamond, a player trying to move the reserves of $\Phi$ to $(R'_{x,t+1},R'_{y,t+1})$ only receives $(1-\beta)(R'_{x,t+1}-R_{x,t})$ while giving $(1-\beta) (R'_{y,t+1}-R_{y,t})$ to $\Phi$. Thus, an arbitrageur wants to find the values of $(R'_{x,t+1},R'_{y,t+1})$ that maximize:
    \begin{equation}\label{eq:thmLVR}
        (1-\beta)(R'_{x,t+1}-R_{x,t})+ (1-\beta)(R'_{y,t+1}-R_{y,t})\MIFP_{t+1} + \mathbb{E}(\text{conversion}).
    \end{equation}
    where $\mathbb{E}(\text{conversion})$ is the per-block amortized expectancy of the conversion operation for the arbitrageurs. From Lemma \ref{lem:conversion}, we know $\mathbb{E}(\text{conversion})\geq 0$ for $\Phi$. This implies the arbitrageur's max gain is less than: 
    \begin{equation}\label{eq:thmLVR2}
        (1-\beta)(R'_{x,t+1}-R_{x,t})+ (1-\beta)(R'_{y,t+1}-R_{y,t})\MIFP_{t+1},
    \end{equation}
    for the $(R'_{x,t+1},R'_{y,t+1})$ maximizing Equation \ref{eq:thmLVR}. 
    From Equation \ref{eq:normalLVR}, we know this has a maximum at $(R'_{x,t+1},R'_{y,t+1})= (R^*_{x,t+1},R^*_{y,t+1})$. Therefore, the LVR of $\Phi$ is at most:
    \begin{equation}
        (1-\beta)((R^*_{x,t+1}-R_{x,t})+ (R^*_{y,t+1}-R_{y,t})\MIFP_{t+1})=(1-\beta)L.
    \end{equation}
\end{proof}

\end{document}